\documentclass[aps,pra,twocolumn,reprint,superscriptaddress,nofootinbib]{revtex4-2}

\usepackage[svgnames,table,xcdraw]{xcolor}
\usepackage{amsmath}
\usepackage{microtype}
\usepackage{amssymb}
\usepackage{braket}
\usepackage{graphicx}
\usepackage{booktabs}
\usepackage{bm}
\usepackage{dsfont}
\usepackage{hyperref}
\usepackage[caption=false]{subfig}
\usepackage{amsthm}
\newtheorem*{claim}{Claim}
\usepackage{xfrac}
\usepackage{enumitem}
\usepackage{orcidlink}
\usepackage{appendix}
\usepackage[ruled,vlined]{algorithm2e}

\hypersetup{colorlinks,linkcolor={DarkBlue},citecolor={DarkBlue},urlcolor={DarkBlue}}

\let\originalleft\left
                   \let\originalright\right
\renewcommand{\left}{\mathopen{}\mathclose\bgroup\originalleft}
\renewcommand{\right}{\aftergroup\egroup\originalright}

\usepackage{mathtools}

\DontPrintSemicolon
\SetKwInput{KwInput}{Input}
\SetKwInput{KwOutput}{Output}
\SetKwFor{ForEach}{foreach}{do}{}
\SetKwFor{ForLoop}{for}{do}{}

\usepackage{nag}
\usepackage{graphicx}
\usepackage{amsthm}
\usepackage{tabulary}
\usepackage{qcircuit}
\usepackage{mathtools}
\usepackage{bm}
\usepackage{braket}
\usepackage{datetime}
\usepackage{stackrel}
\usepackage{dsfont}
\usepackage{qcircuit}
\usepackage[english]{babel}
\usepackage{units}

\usepackage{tikz}
\usetikzlibrary{backgrounds,decorations.pathreplacing}

\usepackage{chngcntr}
\counterwithout{equation}{section}

\usepackage[normalem]{ulem}
\usepackage{slashed}
\usepackage{soul}

\usepackage{xcolor}
\usepackage{amssymb}
\usepackage{amsmath}
\usepackage{amsthm}
\usepackage{braket}
\usepackage{mathdots}

\usepackage{makeidx}
\makeindex

\usepackage{soul}
\usepackage{xargs}
\newcommandx{\cmnote}[2][1=]{\linespread{1.0}\todo[linecolor=red,backgroundcolor=red!25,bordercolor=red,#1]{#2}}

\let\underline\ul

\allowdisplaybreaks[4]

 \index{} \index{} \index{} \index{} \index{} \index{} \index{} \index{}%%
\makeatletter

\newcommand{\ringplus}{\mathbin{\text{\@ringplus}}}

\newcommand{\@ringplus}{%
  \ooalign{\hidewidth\raise1.3ex\hbox{\tiny$\circ$}\hidewidth\cr$\m@th+$\cr}%
}

\newcommand{\ringminus}{\mathbin{\text{\@ringminus}}}

\newcommand{\@ringminus}{%
  \ooalign{\hidewidth\raise0.9ex\hbox{\tiny$\circ$}\hidewidth\cr$\m@th-$\cr}%
}
\makeatother
 \index{} \index{} \index{} \index{} \index{} \index{} \index{} \index{}%%

\DeclareFontFamily{U}{wncy}{}
\DeclareFontShape{U}{wncy}{m}{n}{<->wncyr10}{}
\DeclareSymbolFont{mcy}{U}{wncy}{m}{n}
\DeclareMathSymbol{\Sh}{\mathord}{mcy}{"58}

 \index{} \index{} \index{} \index{} \index{} \index{} \index{} \index{}%%
 \index{} \index{} \index{} \index{} \index{} \index{} \index{} \index{}%%

 \index{} \index{} \index{} \index{} \index{} \index{} \index{} \index{}%%
 \index{} \index{} \index{} \index{} \index{} \index{} \index{} \index{}%%
\usepackage{graphicx}
\usepackage{graphicx}
\usepackage{multirow}
\usepackage{hhline}
\xyoption{color}
\xyoption{line}

\newcommandx*\bsbal[3][1=black, 3=->]{\ar @[#1]@{#3} [#2,0] \qw}

\newcommandx*\varbs[5][1=black, 3=\theta,4=0.5,5=->]{\ar @[#1]@{#5}^(#4){#3} [#2,0] \qw}

\newcommandx*\lblline[3][3=0.5]{\ar @{-}^(#3){#1} [#2,0]}
\newcommandx*\ctrlg[3][3=0.5]{ \raisebox{-3pt}{$\bullet$}  \ar @{-}^(#3){#1} [#2,0] \qw }
\newcommandx*\ctrlog[2]{\controlo \ar @{-}^{#1} [#2,0] \qw}
\newcommandx*\ctrlodash[1]{\controlo \ar @{-} [#1,0] \ar @[black]@{.} [0,-1]}

\begin{document}

  \title{%
    \texorpdfstring
    {Decoder Dependence in Surface-Code Threshold Estimation \\under Digitized Hybrid Continuous-Variable and Discrete Noise}
    {Decoder Dependence in Surface-Code Threshold Estimation under Digitized Hybrid Continuous-Variable and Discrete Noise}
  }

  \def \affGatech {College of Computing, Georgia Institute of Technology, Atlanta, GA 30332 USA}
  \def \vgg {Volkswagen AG, Berliner Ring 2, Wolfsburg 38440, Germany}
  \def \rwth {Department of Physics, RWTH Aachen, Germany}
  \def \fujf {Department of Computational and Applied Mechanics, \\Federal University of Juiz de Fora, Juiz de Fora, 36036-900, Brazil}
  \def \tubaf {Institute of Computer Science, Faculty of Mathematics and Computer Science, TU Bergakademie Freiberg, Bernhard-von-Cotta-Straße 2, D-09599 Freiberg, Germany}

  \author{Dennis Delali Kwesi Wayo \orcidlink{0000-0001-9980-6247}}
  \affiliation{\affGatech}
  \affiliation{\tubaf}
%    \email{dwayo3@gatech.edu}

  \author{Chinonso Onah, \orcidlink{0000-0002-6296-533X}}
  \affiliation{\vgg}
  \affiliation{\rwth}
%\email{chinonso.calistus.onah@volkswagen.de}
%    \email{calistusnonso@gmail.com}

  \author{Leonardo Goliatt \,\orcidlink{0000-0002-2844-9470}}
  \affiliation{\fujf}
%    \email{goliatt@gmail.com}

  \author{Sven Groppe \,\orcidlink{0000-0001-5196-1117}}
  \affiliation{\tubaf}
%    \email{ sven.groppe@informatik.tu-freiberg.de}

  \date{\today}

  \begin{abstract}
    Surface-code threshold estimates depend on the inference pipeline, including decoder and estimator choices. We compare decoders within a single LiDMaS+ workflow under Pauli-reference and digitized hybrid continuous-variable/discrete sweeps. In the Pauli-reference mode, the matching-style backend outperforms Union-Find and yields crossing median $p_c=0.0531$ (bootstrap interval $[0.0415,0.0572]$) and collapse fit $p_c=0.052$ ($\nu=1.35$). For the hybrid mode, a dense transition-window sweep at $d=3,5,7$ uses $\sigma\in[0.30,0.50]$ with step $0.01$ and $3000$ trials per point. After the initial exact-zero plateau is excluded from crossing localization, the matching-style backend gives interior crossing estimates $\sigma_c=0.4707$ for $(d=3,5)$ and $\sigma_c=0.3275$ for $(d=5,7)$; the latter lies in a low-LER region and remains estimator-sensitive. A targeted $d=9$ extension shows larger Union-Find LER at moderate-to-high $\sigma$ and matching-fallback rates up to $0.747$ at $\sigma=0.50$. In a $d=5$ neural-guidance sensitivity sweep, full learned reweighting reduces the sampled mean LER from $0.1773$ to $0.1663$ over $\sigma\in[0.35,0.55]$. These results show that estimator resolution and backend fallback diagnostics are part of an auditable decoder comparison.
  \end{abstract}

  \maketitle

  \section{Introduction}
  Threshold estimation asks whether increasing code distance reduces logical failure under a specified physical-noise model \cite{surfacecode1998}. The estimate is produced by a pipeline that includes noise generation, syndrome extraction, decoding, and finite-data analysis. It therefore depends on decoder and estimator choices.

  Hybrid continuous-variable/discrete settings motivated by GKP-style architectures \cite{gkp2001,tzitrin2021static,wayo2026lidmas} are a useful test case because continuous physical noise is digitized before decoding. We ask whether decoder choice changes threshold-oriented conclusions when the same protocol is applied to a Pauli-reference channel and a digitized CV-driven channel.

  LiDMaS+ provides the common workflow for this comparison. We evaluate a matching-style backend and Union-Find under matched grids, distances, and deterministic seeds, and include neural-guided matching in the hybrid sweeps. The scope is implementation-controlled: the Pauli-reference runs provide a stable comparison point, whereas the hybrid crossing estimates remain conditional on grid resolution, low-LER behavior, and backend fallbacks.

  A finite-distance curve crossing is useful only when its interpretation is stated carefully. It identifies a change in the sampled distance ordering under a specified decoder, estimator, and sweep grid; it is not, by itself, a decoder-independent material constant. The Pauli-reference mode supplies an internal control because its crossings can be compared with a collapse-based estimate under the same implementation. The hybrid mode tests a more limited question: whether a crossing remains identifiable after Gaussian displacement has been mapped to decoder-facing discrete events. We therefore distinguish interior crossings, boundary-valued coarse-grid proxies, and minimum-separation summaries throughout the study.

%  \subsection{Research Contributions}
  We address three questions: whether decoder choice changes threshold behavior after Gaussian displacement is digitized into effective Pauli faults in the GKP spirit \cite{gkp2001}; whether decoder ordering persists across $d=3,5,7$; and how finite-grid threshold summaries depend on the decoder and estimator \cite{pymatching2021,unionfind2017}.

  The study provides a reproducible LiDMaS+ protocol with matched seeds, grids, and outputs across matching-style, Union-Find, and neural-guided matching. Pauli-reference and hybrid CV-discrete results are reported together with decoder-failure and fallback diagnostics. Crossing estimates are interpreted with their sampled-grid resolution rather than as decoder-independent constants.

%  \subsection{Background and Related Work}
  Surface-code studies commonly report logical-error curves, curve crossings, and finite-size scaling summaries to evaluate whether larger distance improves reliability \cite{surfacecode1998}. A persistent comparability issue is that these summaries depend on the decoder and estimator stack used to produce them, even when the underlying code family is unchanged.

  Matching-based decoding is a standard surface-code reference because it combines decoding quality with practical tractability \cite{pymatching2021}. Union-Find is a lighter alternative whose approximation can alter curve shape and crossing stability \cite{unionfind2017}. We test whether that substitution changes the inferred threshold behavior under matched conditions.

  Neural-guided decoding introduces a separate modeling choice: learned weights can improve a matching heuristic while also creating new failure modes. We therefore report decoder-failure statistics alongside the neural-guided LER curves and describe the guidance model explicitly.

  GKP-style encodings and related photonic proposals motivate noise models in which continuous-variable disturbances are converted into digitized syndrome data and discrete correction actions \cite{gkp2001,tzitrin2021static,wayo2026lidmas}. We test whether decoder dependence persists after this CV-to-discrete mapping.

  Related decoder and code-design work includes XZZX-oriented and bias-tailored strategies, scalable parallel decoders, and gate-oriented fault-tolerant constructions \cite{bonilla2021xzzx,darmawan2021xzzxkerrcat,higgott2023improved,skoric2023parallel,tan2023scalable,xu2023tailored,dua2024clifford,kobayashi2024crosscap,sahay2025transversal,kang2024joint,ha2025architectures}.

  Experimental demonstrations across superconducting, silicon-spin, and bosonic platforms provide complementary evidence on logical protection and below-threshold operation \cite{zhao2022surfacecode,krinner2022repeated,takeda2022silicon,ni2023breakeven,acharya2023suppressing,reglade2024catcontrol,ding2025kerrcat,google2025belowthreshold}. Related work addresses logical-noise mitigation and high-threshold memory constructions \cite{smith2024mitigating,bravyi2024ldpcmemory,xu2023squeezedcat}.

  \subsection{LiDMaS+ as the Experimental Platform}
  LiDMaS+ is the experimental platform for this study, not the subject of a software benchmark. All runs use the \texttt{--surface\_threshold} workflow with explicit decoder, distance, sweep-window, trial-count, and output arguments. The common code path removes cross-tool differences from the decoder comparison.

  The matching-style, Union-Find, and neural-guided matching backends use the same LiDMaS+ decoder interface. Study-specific shell scripts call the compiled binary for fixed-distance baselines, multi-distance sweeps, and threshold-oriented runs without changing the surrounding experiment logic.

  Each run records the seed, decoder, distance, physical-noise parameter, trial count, logical-error statistics, confidence intervals, and decoder diagnostics in machine-readable outputs. Pauli-reference runs additionally emit scaling reports and JSON summaries for finite-size analysis. Hybrid crossing estimates are calculated from the saved sampled curves.

  The dedicated \texttt{examples/paper\_runs/paper\_01} workflow standardizes the experiment matrix, output directories, and merged tabular-data generation without altering decoder logic.

  This study does not benchmark runtime or accuracy against external platforms. Stim is a dedicated QEC simulator; Qiskit, PennyLane, and Qibo are general-purpose frameworks that can support QEC workflows; and Mitiq is an error-mitigation toolkit. Cross-platform performance comparisons are outside the present scope.

  \section{Methodology}
  For each operating point, LiDMaS+ samples noise, extracts syndrome data, runs one decoder backend, applies the correction, and records logical failure. Fixed seeds, sweep grids, and trial budgets provide matched decoder comparisons. The pipeline then computes LER confidence intervals, aggregates distance-dependent curves, and runs mode-specific threshold analyses.

  Figure~\ref{fig:surface-threshold-workflow} summarizes the matched simulation and analysis workflow.

  \begin{figure*}[t]
    \centering
    \includegraphics[width=0.98\textwidth]{figure_surface_threshold_workflow.pdf}
    \caption{LiDMaS+ workflow used for the controlled decoder comparison. Pauli-reference and digitized hybrid CV modes share the syndrome-extraction, decoder, correction, aggregation, and reporting paths. The hybrid mode passes parity-mapped Pauli faults, not continuous soft information, to the decoder backend.}
    \label{fig:surface-threshold-workflow}
  \end{figure*}

  \subsection{Noise Models}
  The study uses two LiDMaS+ modes. The Pauli-reference mode has sweep variable $p$ and samples a code-capacity-style surrogate, not a noisy-gate circuit-level schedule. It provides an internal decoder reference.

  The hybrid CV-discrete mode has sweep variable $\sigma$. Each trial samples Gaussian displacement and digitizes it to Pauli components in a GKP-inspired manner. The threshold harness decodes one syndrome sector and does not pass continuous soft information to the decoder backend. The resulting curves describe a digitized CV-driven effective channel, not analog-information decoding.

  The paired modes serve different interpretive roles. The Pauli-reference mode isolates decoder behavior under a direct Bernoulli fault parameter $p$. The hybrid mode introduces a preceding displacement-to-Pauli map and uses $\sigma$ as the physical sweep parameter. Consequently, numerical values of $p$ and $\sigma$ should not be compared as if they were interchangeable error probabilities. The comparison of interest is within each mode: whether distance ordering, decoder ordering, and crossing stability change when the decoder backend is varied while the surrounding workflow is held fixed.

  The lack of continuous soft information is also an explicit scope boundary. Digitization preserves the discrete parity events consumed by the current surface-code decoder interface, but it discards confidence information that could be used by an analog-aware decoder. The hybrid curves therefore measure decoder dependence after digitization. They do not estimate the improvement that might be obtained from continuous likelihoods, nor do they constitute a benchmark of a full circuit-level GKP architecture. This distinction is important when interpreting the hybrid crossing values and when comparing them with thresholds reported for other noise models.

  Operational definitions for both modes are summarized in Table~\ref{tab:mode-definitions}.

  \begin{table*}[t]
    \centering
    \small
    \caption{Operational mode definitions used in this study.}
    \label{tab:mode-definitions}
    \begin{tabular}{lllll}
      \toprule
      Mode & Sweep & Fault map & Decoded branch & Soft info \\
      \midrule
      Pauli reference & $p$ & Bernoulli $X$ & $e^{(X)} \rightarrow s^{(Z)}$ & none \\
      Hybrid (digitized CV) & $\sigma$ & Gaussian $\rightarrow$ parity-mapped Pauli & $e^{(X)} \rightarrow s^{(Z)}$ & none (no continuous stream) \\
      \bottomrule
    \end{tabular}
  \end{table*}

  \subsection{Decoders Compared}
  The comparison includes the LiDMaS+ matching-style backend, exposed by the \texttt{mwpm} run option, Union-Find, and neural-guided matching. Matching-based decoding is a standard surface-code reference \cite{pymatching2021}; the present backend is implementation-specific and is reported with fallback diagnostics.

  Union-Find is the approximate alternative \cite{unionfind2017}. Its inclusion tests whether a lighter decoder preserves the logical-error trends and crossing summaries obtained with the matching-style backend.

  For small defect sets, the LiDMaS+ matching-style backend solves the pair/boundary matching objective exactly with a subset dynamic program. Above the configured small-instance limit, it uses a greedy fallback. The hybrid transition-window and $d=9$ runs report the fallback rate. These results are not benchmarks of a production Blossom implementation.

  Neural-guided matching reweights the same matching graph; it is not an end-to-end neural decoder. A trained linear model maps defect-coordinate offsets and boundary-proximity features to multiplicative edge-weight adjustments before matching. Training minimizes a simulator-in-the-loop objective over hybrid points with a decoder-failure penalty. Hybrid results report both decoder-failure and matching-fallback diagnostics.

  The comparison is therefore between concrete LiDMaS+ backends, not between idealized decoder families. For the matching-style backend, the fallback rate marks the fraction of requests that leave the exact small-instance solver and enter the greedy path. A low fallback rate supports interpretation in terms of the exact objective implemented by the backend. A high rate means that the reported LER increasingly reflects the fallback policy as well. For neural-guided matching, the decoder-failure diagnostic is separate from logical failure: it records cases in which the guided correction fails the syndrome-consistency check. Reporting both quantities prevents an apparently favorable LER curve from concealing backend instability.

  \subsection{Threshold Estimation Protocol}
  For each decoder, distance, and physical-noise value, LiDMaS+ generates repeated samples, extracts syndromes, decodes them, and records logical failures. The LER is the empirical failure frequency. Confidence intervals are reported for each point, and each decoder uses the same noise grid and trial count.

  For the Pauli-reference mode, built-in threshold-estimation and finite-size-scaling routines return pairwise crossings, a crossing median, and a collapse-based critical-point estimate with bootstrap uncertainty. For the hybrid mode, matched multi-distance sweeps support distance-reversal and crossing analysis, but not an asymptotic threshold claim.

  Crossing localization is performed on the sampled distance-dependent curves. An interior sign change in the LER difference between two distances supports interpolation between adjacent grid points. In the hybrid mode, the estimator first excludes an initial exact-zero plateau, because equality caused by zero observed failures does not localize a transition. The exclusion does not remove nonzero observations from the transition region. If no sign change remains on the sampled grid, the reported minimum-separation point is labeled as such rather than presented as a crossing. The collapse fit used for the Pauli-reference mode is a complementary finite-size summary; agreement with crossing estimates is evidence of stability within the sampled protocol, not a universality claim.

  \subsection{Algorithmic Summary}
  Algorithms~\ref{alg:surface-threshold}, \ref{alg:mwpm-decoder}, \ref{alg:uf-decoder}, and \ref{alg:neural-mwpm} summarize the threshold loop and the three decoders.

  \begin{algorithm}[t]
\caption{Surface-code threshold run (generic decoder)}
\label{alg:surface-threshold}
\KwInput{Decoder $D$, mode $m$, distances $\mathcal{D}$, sweep grid $\mathcal{S}$, trials $T$, base seed $s_0$}
\KwOutput{Logical-error estimates and confidence intervals for each $(d,\theta)$ point}
\ForEach{$d \in \mathcal{D}$}{
  \ForEach{$\theta \in \mathcal{S}$}{
    failures $\gets 0$\;
    \ForLoop{$t=1$ \KwTo $T$}{
      Set RNG seed from $(s_0,d,\theta,t)$\;
      Sample physical noise under $(m,\theta)$ and extract syndrome $y_t$\;
      $c_t \gets \mathrm{Decode}_D(y_t,d)$\;
      \If{$c_t$ yields a logical failure}{
        failures $\gets$ failures $+ 1$\;
      }
    }
    Record $\mathrm{LER}(d,\theta)=\mathrm{failures}/T$ and CI\;
  }
}
  \end{algorithm}

  \begin{algorithm}[t]
\caption{Matching-style surface decoder}
\label{alg:mwpm-decoder}
\KwInput{Syndrome graph $G=(V,E)$, boundary set $\partial V$, edge weights $w$}
\KwOutput{Surface correction $c$}
Build matching instance with pair and boundary edges\;
Assign edge and boundary costs from $w$\;
$M \gets \mathrm{MatchingSolveWithFallback}(G,w)$\;
Initialize correction bitmask $c \gets 0$\;
\ForEach{matched pair $(u,v)\in M$}{
  Route minimum-weight path $P(u,v)$\;
  Toggle qubits along $P(u,v)$ in $c$\;
}
Verify syndrome consistency $Hc = y \pmod 2$\;
\textbf{return} $c$\;
  \end{algorithm}

  \begin{algorithm}[t]
\caption{Union-Find surface decoder}
\label{alg:uf-decoder}
\KwInput{Defect set $\mathcal{X}$ on lattice $L$}
\KwOutput{Surface correction $c$}
Initialize one odd cluster per defect\;
\While{an odd cluster exists}{
  Grow each odd-cluster frontier by one edge\;
  Merge clusters on frontier collisions\;
  Attach a cluster to boundary when reached\;
}
Build spanning forest in each merged cluster\;
Peel leaves to satisfy parity and mark correction flips in $c$\;
Verify syndrome consistency\;
\textbf{return} $c$\;
  \end{algorithm}

  \begin{algorithm}[t]
\caption{Neural-guided matching surface decoder}
\label{alg:neural-mwpm}
\KwInput{Defects $\mathcal{X}$, trained model $f_{\phi}$, base matching weights $w$}
\KwOutput{Guided correction $c$ and decoder-failure diagnostic if needed}
\ForEach{defect pair $(i,j)$}{
  Extract features $x_{ij}$ (distance and boundary-aware terms)\;
  $\alpha_{ij} \gets \mathrm{Clip}(f_{\phi}(x_{ij}))$\;
  $\tilde{w}_{ij} \gets w_{ij}+\lambda(\alpha_{ij}w_{ij}-w_{ij})$, $\lambda\in[0,1]$\;
}
\ForEach{defect $i$}{
  Compute guided boundary weight $\tilde{w}_{i,\partial}$\;
}
Solve matching instance on guided weights $\tilde{w}$ and lift to correction $c$\;
\If{syndrome verification fails}{
  Emit decoder-failure diagnostic\;
}
\textbf{return} $c$\;
  \end{algorithm}

  \subsection{Supplementary Decoder Trace}
  Appendix~\ref{app:indexed-trace} provides the indexed $d=5$ lattice diagram and single-instance decoder trace.

  \subsection{Metrics}
  The primary metric is the logical error rate (LER). Main comparison figures display 95\% Wilson-score uncertainty bands. The Pauli-reference threshold analysis also reports crossing estimates, collapse-based critical-point estimates, fitted exponent $\nu$, and collapse cost. The hybrid analysis reports decoder-labeled crossing estimates or proxies from the sampled multi-distance curves and identifies grid-limited cases.

  The Wilson intervals quantify finite-sample uncertainty in the observed failure frequency at each operating point. They do not include systematic uncertainty from the noise map, decoder implementation, sweep window, or crossing estimator. The AUC proxy is used as a compact curve-level summary over a fixed sampled window; it is not a threshold estimate and should not be compared across different windows without qualification. Decoder-failure and fallback rates are reported alongside LER because they identify implementation behavior that an LER value alone cannot expose.

  \section{Experimental Design}
  The experiment matrix in Table~\ref{tab:experimental-matrix} covers fixed-distance baselines, multi-distance sweeps, threshold-oriented runs, a dense hybrid transition window, a targeted $d=9$ extension, and a neural-guidance sensitivity sweep. Within each run block, decoders use deterministic seeds, identical sweep grids, and the same tabular output format.

  \begin{table*}[t]
    \centering
    \small
    \caption{Experimental matrix for the decoder-comparison study (seed 1337; common LiDMaS+ workflow).}
    \label{tab:experimental-matrix}

    \begin{tabular}{llllll}
      \toprule
      Run & Mode & $d$ & Sweep & Trials & Purpose \\
      \midrule

      Pauli baseline &
      Pauli &
      5 &
      $p\in[0.03,0.12]$, 0.01 &
      3000 &
      Decoder comparison (fixed $d$) \\

      Hybrid baseline &
      Hybrid (digitized CV) &
      5 &
      $\sigma\in[0.05,0.60]$, 0.05 &
      2000 &
      Decoder comparison (fixed $d$)  \\

      Hybrid multi-$d$ &
      Hybrid (digitized CV) &
      3,5,7 &
      $\sigma\in[0.05,0.60]$, 0.05 &
      2000 &
      Distance scaling trends \\

      Pauli threshold &
      Pauli &
      3,5,7 &
      $p\in[0.04,0.12]$, 0.01 &
      2000 &
      Crossings / FSS \\

      Hybrid threshold &
      Hybrid (digitized CV) &
      3,5,7 &
      $\sigma\in[0.05,0.60]$, 0.05 &
      2000 &
      Crossing proxy; reversal \\

      Dense transition-window hybrid &
      Hybrid (digitized CV) &
      3,5,7 &
      $\sigma\in[0.30,0.50]$, 0.01 &
      3000 &
      Interior crossing localization \\

      Targeted distance extension &
      Hybrid (digitized CV) &
      9 &
      $\sigma\in[0.35,0.50]$, 0.05 &
      1000 &
      Scaling check; fallback diagnostics \\

      Neural sensitivity &
      Hybrid (digitized CV) &
      5 &
      $\sigma\in[0.35,0.55]$, 0.05 &
      3000 &
      Guidance strength $\lambda=0,0.5,1$ \\

      \bottomrule
    \end{tabular}
  \end{table*}

  The fixed-distance Pauli-reference baseline uses $d=5$, $p\in[0.03,0.12]$ with step $0.01$, and $3000$ trials per point. It checks whether the workflow recovers the expected advantage of matching-style decoding over Union-Find in the discrete reference mode.

  The fixed-distance hybrid baseline also uses $d=5$, with $\sigma\in[0.05,0.60]$, step $0.05$, and $2000$ trials per point for matching-style, Union-Find, and neural-guided matching. It tests whether decoder separation remains visible after CV displacement is digitized into decoder-facing events. No continuous soft-information stream is passed to the decoder backend.

  The hybrid multi-distance sweep uses $d=3,5,7$, the same $\sigma$ window, and the same three decoders. It identifies changes in distance ordering as noise increases and supplies the curves used for crossing analysis.

  The threshold-oriented Pauli-reference run uses $d=3,5,7$, $p\in[0.04,0.12]$ with step $0.01$, $2000$ trials per point, and $100$ bootstrap samples in the saved scaling report. It produces pairwise crossings, a crossing median, and collapse-fit summaries for matching-style and Union-Find decoding. The corresponding hybrid run uses $d=3,5,7$, $\sigma\in[0.05,0.60]$ with step $0.05$, and $2000$ trials per point. Its coarse grid supports distance-reversal analysis and a crossing proxy, not a precise asymptotic threshold.

  Three targeted runs supplement the coarse hybrid grid. A $d=3,5,7$ transition-window sweep uses $\sigma\in[0.30,0.50]$ with step $0.01$ and $3000$ trials per point. A $d=9$ extension samples $\sigma\in[0.35,0.50]$ with step $0.05$ and $1000$ trials per point. A $d=5$ sensitivity sweep compares guidance strengths $\lambda=0,0.5,1$ over $\sigma\in[0.35,0.55]$ with step $0.05$ and $3000$ trials per point.

  Matched run settings reduce avoidable sources of variation between decoders. Each decoder receives the same distance set, physical-noise grid, trial budget, and deterministic seed schedule within a run block. This design does not remove Monte Carlo uncertainty, and it does not force two decoders to follow the same correction path. It ensures that differences in the reported curves are evaluated under the same sampled protocol. The targeted runs then spend additional trials where the coarse sweep indicates that interpretation is most sensitive to grid resolution or backend behavior.

  \section{Results}
  \begin{figure*}[t]
    \centering
    \subfloat[Pauli baseline at $d=5$.]{
      \includegraphics[width=0.48\textwidth]{figure_pauli_baseline.pdf}
    }
    \hfill
    \subfloat[Hybrid baseline at $d=5$.]{
      \includegraphics[width=0.48\textwidth]{figure_hybrid_baseline.pdf}
    }
    \caption{Fixed-distance decoder comparison under matched sweep settings. Panel (a) compares matching-style and Union-Find decoding in the Pauli-reference mode. Panel (b) compares matching-style, Union-Find, and neural-guided matching in the hybrid mode at $d=5$.}
    \label{fig:baseline-comparison}
  \end{figure*}

  \subsection{Baseline Decoder Comparison}
  In the Pauli-reference baseline at $d=5$ [Fig.~\ref{fig:baseline-comparison}(a)], the matching-style backend has lower LER than Union-Find at every sampled point. Representative values are $0.056$ versus $0.130$ at $p=0.03$, $0.141$ versus $0.260$ at $p=0.05$, $0.274$ versus $0.436$ at $p=0.08$, and $0.479$ versus $0.602$ at $p=0.12$. The sampled mean LER is $0.260$ for matching-style decoding and $0.384$ for Union-Find; the corresponding AUC proxies are $0.0233$ and $0.0347$.

  The reported confidence intervals remain separated at representative low-, mid-, and high-noise points. The Pauli-reference result provides the comparison point for the hybrid analysis.

  In the hybrid baseline at $d=5$ [Fig.~\ref{fig:baseline-comparison}(b)], matching-style and neural-guided matching remain close, while Union-Find rises faster with $\sigma$. Representative values for matching-style, Union-Find, and neural-guided matching are $(0.0425,0.101,0.044)$ at $\sigma=0.40$, $(0.278,0.421,0.264)$ at $\sigma=0.50$, and $(0.552,0.6245,0.5365)$ at $\sigma=0.60$. Table~\ref{tab:hybrid-baseline-summary} reports sampled mean LER $(0.1195,0.1657,0.1158)$ and AUC proxies $(0.0579,0.0838,0.0561)$ in the same decoder order.

  The fixed-distance baselines establish two points for the multi-distance analysis. First, decoder separation is visible before threshold estimation is applied, so it is not created solely by a crossing heuristic. Second, the close agreement between matching-style and neural-guided matching at $d=5$ limits the strength of any learned-decoder claim: the sensitivity analysis must determine whether guidance changes the sampled curve consistently, rather than relying on a single operating point.

  \begin{table*}[t]
    \centering
    \caption{Hybrid fixed-distance summary at $d=5$ for $\sigma\in[0.05,0.60]$ (2000 trials/point).}
    \label{tab:hybrid-baseline-summary}
    \begin{tabular}{lcccc}
      \toprule
      Decoder & Points & Mean LER & AUC proxy & Max decoder fail rate \\
      \midrule
      Matching-style & 12 & 0.1195 & 0.0579 & 0.0000 \\
      Union-Find & 12 & 0.1657 & 0.0838 & 0.0000 \\
      Neural-guided matching & 12 & 0.1158 & 0.0561 & 0.0010 \\
      \bottomrule
    \end{tabular}
  \end{table*}

  \begin{figure*}[t]
    \centering
    \subfloat[Matching-style backend.]{
      \includegraphics[width=0.32\textwidth]{figure_mwpm_transition_window_hybrid_crossing.pdf}
    }
    \hfill
    \subfloat[Union-Find.]{
      \includegraphics[width=0.32\textwidth]{figure_uf_transition_window_hybrid_crossing.pdf}
    }
    \hfill
    \subfloat[Neural-guided matching.]{
      \includegraphics[width=0.32\textwidth]{figure_neural_mwpm_transition_window_hybrid_crossing.pdf}
    }
    \caption{Dense hybrid transition-window sweeps for $d=3,5,7$ on $\sigma\in[0.30,0.50]$ with step $0.01$ and $3000$ trials per point. Shaded regions are 95\% Wilson-score confidence bands.}
    \label{fig:transition-window-hybrid-crossing}
  \end{figure*}

  \begin{figure}[t]
    \centering
    \includegraphics[width=\columnwidth]{figure_mwpm_multidistance.pdf}
    \caption{Hybrid multi-distance logical-error curves for $d=3,5,7$ using the matching-style backend. Distance ordering reverses between the low- and high-$\sigma$ regimes; Table~\ref{tab:hybrid-multidistance-summary} summarizes the decoder comparison.}
    \label{fig:hybrid-multidistance}
  \end{figure}

  \subsection{Distance-Dependent Threshold Trends}
  The multi-distance hybrid experiment extends the comparison to $d=3,5,7$ [Fig.~\ref{fig:hybrid-multidistance} and Table~\ref{tab:hybrid-multidistance-summary}]. At $d=3$, the AUC proxies are close: $0.0534$ for matching-style decoding, $0.0527$ for Union-Find, and $0.0523$ for neural-guided matching. At $d=5$ and $d=7$, the Union-Find AUC increases to $0.0838$ and $0.1085$. Neural-guided matching remains close to matching-style decoding in AUC but has higher decoder-failure diagnostics at high noise for $d=7$.

  For the matching-style backend, representative LER values at $\sigma=0.30$ are $(0.006,0.0015,0.001)$ for $d=(3,5,7)$ and at $\sigma=0.40$ are $(0.0665,0.0425,0.0495)$. The ordering reverses by $\sigma=0.50$, where the values are $(0.2435,0.278,0.313)$, and widens at $\sigma=0.60$ to $(0.441,0.552,0.644)$. Union-Find shows the same qualitative reversal with larger high-noise LER; for $d=7$, the values are $0.2035$ at $\sigma=0.40$ and $0.6935$ at $\sigma=0.60$. Neural-guided matching tracks matching-style decoding through moderate noise and develops higher decoder-failure diagnostics at the largest sampled $\sigma$.

  Decoder dependence appears differently in the two modes. Matching-style decoding outperforms Union-Find throughout the Pauli-reference sweep. In the hybrid mode, Union-Find degradation is concentrated at larger distances and moderate-to-high $\sigma$. Neural-guided matching remains close to matching-style decoding in LER but develops decoder-failure diagnostics at the largest sampled noise values. These data support decoder-dependent hybrid behavior, not an asymptotic threshold claim.

  \begin{table*}[t]
    \centering
    \caption{Hybrid multi-distance summary from sampled curves (2000 trials/point).}
    \label{tab:hybrid-multidistance-summary}
    \begin{tabular}{lccc}
      \toprule
      Decoder & AUC at $d=3$ & AUC at $d=5$ & AUC at $d=7$ \\
      \midrule
      Matching-style & 0.0534 & 0.0579 & 0.0672 \\
      Union-Find & 0.0527 & 0.0838 & 0.1085 \\
      Neural-guided matching & 0.0523 & 0.0561 & 0.0695 \\
      \bottomrule
    \end{tabular}
  \end{table*}

  \begin{table*}[t]
    \centering
    \caption{Pauli threshold and finite-size scaling summary extracted from the $d=3,5,7$ runs.}
    \label{tab:pauli-threshold-summary}
    \begin{tabular}{lccccc}
      \toprule
      Decoder & Crossing median $p_c$ & Crossing interval & Collapse $p_c$ & $\nu$ & Collapse cost \\
      \midrule
      Matching-style & 0.0531 & $[0.0415,\,0.0572]$ & 0.0520 & 1.35 & $8.05\times 10^{-5}$ \\
      Union-Find & N/A & -- & 0.0400 & 0.50 & $2.33\times 10^{-3}$ \\
      \bottomrule
    \end{tabular}
  \end{table*}

  \begin{figure*}[t]
    \centering
    \subfloat[Matching-style Pauli-reference threshold dataset ($d=3,5,7$).]{
      \includegraphics[width=0.48\textwidth]{figure_mwpm_pauli_threshold.pdf}
    }
    \hfill
    \subfloat[Union-Find Pauli-reference threshold dataset ($d=3,5,7$).]{
      \includegraphics[width=0.48\textwidth]{figure_uf_pauli_threshold.pdf}
    }
    \caption{Pauli-reference threshold curves for matching-style and Union-Find decoding.}
    \label{fig:pauli-threshold-curves}
  \end{figure*}

  \begin{table*}[t]
    \centering
    \caption{Dense transition-window hybrid crossing summary after excluding each initial exact-zero low-noise plateau.}
    \label{tab:hybrid-crossing-summary}
    \begin{tabular}{lcccc}
      \toprule
      Decoder & $\sigma_{c}(d=3,5)$ & Method & $\sigma_{c}(d=5,7)$ & Method \\
      \midrule
      Matching-style & 0.4707 & linear interp. & 0.3275 & linear interp. \\
      Union-Find & 0.3296 & linear interp. & 0.3100 & min.\ $|\Delta\mathrm{LER}|$ \\
      Neural-guided matching & 0.4700 & grid exact & 0.3225 & linear interp. \\
      \bottomrule
    \end{tabular}
  \end{table*}

  \begin{figure}[t]
    \centering
    \includegraphics[width=\columnwidth]{figure_hybrid_d9_extension.pdf}
    \caption{Targeted $d=9$ hybrid extension on $\sigma\in[0.35,0.50]$ with step $0.05$ and $1000$ trials per point. Shaded regions are 95\% Wilson-score confidence bands.}
    \label{fig:hybrid-d9-extension}
  \end{figure}

  \begin{table}[t]
    \centering
    \small
    \caption{Targeted $d=9$ hybrid diagnostics.}
    \label{tab:hybrid-d9-summary}
    \begin{tabular}{lccc}
      \toprule
      Decoder & Mean LER & Max fail & Max fallback \\
      \midrule
      Matching-style & 0.1833 & 0 & 0.747 \\
      Union-Find & 0.3568 & 0 & 0 \\
      Neural-guided matching & 0.2370 & 0.196 & 0.747 \\
      \bottomrule
    \end{tabular}
  \end{table}

  \begin{figure}[t]
    \centering
    \includegraphics[width=\columnwidth]{figure_neural_guidance_sensitivity.pdf}
    \caption{Neural-guidance sensitivity at $d=5$ for $\lambda=0,0.5,1$ over $\sigma\in[0.35,0.55]$. Shaded regions are 95\% Wilson-score confidence bands.}
    \label{fig:neural-guidance-sensitivity}
  \end{figure}

  \begin{table}[t]
    \centering
    \small
    \caption{Neural-guidance sensitivity at $d=5$ (3000 trials/point).}
    \label{tab:neural-guidance-summary}
    \begin{tabular}{lcc}
      \toprule
      Guidance strength & Mean LER & Max LER \\
      \midrule
      $\lambda=0$ & 0.1773 & 0.4320 \\
      $\lambda=0.5$ & 0.1768 & 0.4287 \\
      $\lambda=1$ & 0.1663 & 0.4113 \\
      \bottomrule
    \end{tabular}
  \end{table}

  \subsection{Crossing and Threshold Estimates}
  Table~\ref{tab:pauli-threshold-summary} and Fig.~\ref{fig:pauli-threshold-curves} report the Pauli-reference threshold analysis. For the matching-style backend, pairwise crossings occur at $p_c\approx 0.0622$ for $(d=3,d=5)$ and $p_c\approx 0.0439$ for $(d=3,d=7)$. The crossing median is $0.0531$ with bootstrap interval $[0.0415,\,0.0572]$. The collapse fit gives $p_c=0.052$ with interval $[0.040,\,0.064]$, exponent $\nu=1.35$, and collapse cost $8.05\times 10^{-5}$.

  For Union-Find, no crossings are detected on the sampled Pauli-reference grid. The collapse fit reaches the lower boundary of the search region, with $p_c=0.040$, $\nu=0.5$, and collapse cost $2.33\times 10^{-3}$. Under this protocol, Union-Find does not yield a stable scalar threshold summary.

  Figure~\ref{fig:transition-window-hybrid-crossing} and Table~\ref{tab:hybrid-crossing-summary} report the dense hybrid transition-window analysis. The estimator excludes the initial exact-zero plateau before searching for interior sign changes. Matching-style decoding gives $\sigma_c=0.4707$ for $(d=3,d=5)$ and $\sigma_c=0.3275$ for $(d=5,d=7)$. Neural-guided matching gives $0.4700$ and $0.3225$. The $(d=5,d=7)$ estimates occur at low LER and remain estimator-sensitive. Union-Find gives an interior $(d=3,d=5)$ estimate of $0.3296$; the $(d=5,d=7)$ pair has no sign change on the dense grid and is summarized by its minimum absolute curve separation at $\sigma=0.3100$.

  The $d=9$ extension [Fig.~\ref{fig:hybrid-d9-extension} and Table~\ref{tab:hybrid-d9-summary}] gives mean LER $0.3568$ for Union-Find and $0.1833$ for the matching-style backend over $\sigma\in[0.35,0.50]$. The matching-style backend reaches a maximum greedy-fallback rate of $0.747$. Neural-guided matching reaches the same fallback rate and a decoder-failure rate of $0.196$. These rates limit quantitative interpretation of the $d=9$ results.

  The sensitivity analysis [Fig.~\ref{fig:neural-guidance-sensitivity} and Table~\ref{tab:neural-guidance-summary}] varies the guidance strength at $d=5$. Increasing $\lambda$ from $0$ to $1$ reduces sampled mean LER from $0.1773$ to $0.1663$ and maximum LER from $0.4320$ to $0.4113$. No decoder failures or matching fallbacks occur in this sweep.

  These summaries should be read at two levels. The decoder ordering is a curve-level observation: Union-Find exhibits larger LER over the sampled moderate-to-high-noise hybrid region, and the targeted $d=9$ run preserves that ordering. The scalar crossing values require stricter qualification. The two matching-style pairwise estimates are separated because the available distances are small and the lower estimate lies in a low-LER region. The Union-Find $(d=5,d=7)$ result is weaker still because the dense grid contains no sign change. Reporting the localization method beside each value makes this difference visible instead of compressing distinct estimator outcomes into a single nominal threshold.

  \subsection{Comparison with Previous Studies}
  Table~\ref{tab:literature-comparison} places the LiDMaS+ results alongside representative published surface-code studies. The studies use different noise models, decoders, and endpoint metrics, so the values are not one-to-one benchmarks.

  \begin{table*}[t]
    \centering
    \small
    \caption{Contextual comparison between the present study and representative published surface-code studies.}
    \label{tab:literature-comparison}

    \begin{tabular}{llll}
      \toprule
      Study & Setting & Decoder & Key result \\
      \midrule

      Present study (LiDMaS+) &
      Simulated, $d=3,5,7$; targeted $d=9$ &
      Matching-style; UF; neural-guided &
      Pauli $p_c=0.0531$; transition-window hybrid crossings \\

      Wang (2011)\cite{wang2011surfaceover1} &
      Stochastic noise &
      Matching-style &
      $1.1$--$1.4\%$ threshold \\

      Fowler (2012)\cite{fowler2012proofmatching} &
      2D NN, noisy gates &
      Matching proof &
      $p<7.4\times10^{-4}$ \\

      Tuckett (2018)\cite{tuckett2018ultrahigh} &
      Biased noise &
      Tensor-network &
      $43.7\%$ (pure); $28.2\%$ at $\eta=10$ \\

      Tuckett (2020)\cite{tuckett2020faulttolerant} &
      Fault-tolerant bias &
      Bias-aware MWPM &
      $>6\%$ (dephasing); $\sim5\%$ at $\eta=100$ \\

      Google QAI (2023)\cite{acharya2023suppressing} &
      Superconducting, $d=3,5$ &
      Approx.\ maximum likelihood &
      $2.914\%$ (d=5) vs $3.028\%$ (d=3) \\

      Google QAI (2025)\cite{google2025belowthreshold} &
      Hardware to $d=7$ &
      Neural + real-time &
      $\Lambda=2.14$; $\epsilon_7=1.43\times10^{-3}$ \\

      Lee (2021)\cite{lee2021rectangular} &
      Rectangular vs square &
      MWPM &
      Failure ratio $4.37$ at $\eta=2.5$ \\

      \bottomrule
    \end{tabular}
  \end{table*}

  The range of values in Table~\ref{tab:literature-comparison} reflects differences in noise models, decoder assumptions, and reported metrics. The present study isolates decoder and estimator effects within one workflow.

  \section{Discussion}
  In the Pauli-reference mode, the matching-style backend yields lower LER, identifiable crossings, and a lower collapse cost than Union-Find. In the digitized hybrid mode, Union-Find degrades at larger distance and moderate-to-high $\sigma$. Neural-guided matching remains close to unguided matching at moderate noise but develops failure diagnostics at larger distance and high noise.

  The dense hybrid transition window replaces lower-boundary coarse-grid proxies with interior crossing estimates. The low-LER $(d=5,d=7)$ estimates remain sensitive to the sampled grid. At $d=9$, high matching-fallback rates limit quantitative interpretation.

  Several conclusions are robust within the implemented protocol. The Pauli-reference matching-style curves remain below the Union-Find curves, the matching-style Pauli-reference crossing and collapse summaries are mutually consistent at the reported resolution, and the hybrid Union-Find curves worsen relative to matching-style decoding as distance and $\sigma$ increase. Other conclusions are conditional. The hybrid pairwise crossings are not yet a single converged critical point, and the $d=9$ matching-style values mix exact-solver and greedy-fallback behavior. The neural-guidance sweep shows a modest improvement over the sampled window, but it does not establish superiority over production matching or analog-aware learned decoders.

  The separation between robust and conditional findings is useful for subsequent experiments. A production matching backend would remove the high-defect fallback as a confounding factor. Adaptive allocation of trials would increase resolution where pairwise curves are close without spending the same budget on plateaus. A broader distance set would test whether the low-LER hybrid crossings move toward a common value or remain pair-dependent. Finally, passing continuous likelihood information through the decoder interface would create a distinct analog-aware experiment rather than an incremental modification of the present digitized workflow.

  Threshold-oriented results should report the decoder, estimator, sweep resolution, confidence intervals, decoder-failure diagnostics, and backend fallback rate.

  \subsection{Scope and Validity Controls}
  All decoders run in the same LiDMaS+ harness with matched seeds, sweep grids, distance sets, and output/estimator paths. This design supports an internal comparison; it does not establish cross-platform performance.

  Finite sampling limits resolution in low-LER tails and near crossing regions. Fixed-distance sweeps use $3000$ trials per point, threshold-oriented sweeps use $2000$, the dense hybrid transition window uses $3000$, and the $d=9$ extension uses $1000$. Adaptive sampling near candidate crossings would reduce uncertainty.

  Distance coverage is limited to $d=3,5,7$, with a targeted extension at $d=9$. These distances do not support asymptotic or universality-class claims.

  The reported summaries depend on the LiDMaS+ noise-to-syndrome mapping, decoder interface, estimator definitions, and matching backend. The matching-style backend solves small instances exactly and applies a greedy fallback for larger defect sets; outputs include the fallback rate.

  The neural-guided model is lightweight and trained within this workflow. Its results are an internal sensitivity analysis, not a state-of-the-art learned-decoder benchmark.

  Within this scope, decoder choice changes threshold summaries, and hybrid scalar crossings remain estimator- and implementation-conditional.

  \section{Conclusion and Future Work}
  This study compares surface-code decoders under Pauli-reference and digitized hybrid continuous-variable/discrete noise within one LiDMaS+ workflow. The matching-style backend produces lower LER than Union-Find in the Pauli-reference mode and yields a stable threshold summary under the sampled protocol. In the hybrid mode, Union-Find degrades at fixed and larger distances. Neural-guided matching remains close to unguided matching at moderate noise but exhibits decoder-failure diagnostics at the largest sampled noise values.

  Threshold values are outputs of an inference pipeline, not decoder-free properties of a code and noise model alone. They depend on the decoding backend, sweep window, estimator, and statistical budget near the transition region. LiDMaS+ makes these choices comparable within one workflow.

  Future work should replace the high-defect fallback with a production matching backend, extend controlled scaling beyond $d=9$, allocate trials adaptively near candidate crossings, and validate learned models on independent datasets.

  \section*{Author Contributions}

  D.D.K.W: conceptualization, methodology, validation and visualization, software, writing – original draft, review \& editing. C.O: methodology, validation and visualization, software, writing – original draft, review \& editing.  L.G: methodology, validation and visualization, software, writing – original draft, review \& editing. S.G: methodology, validation and visualization, software, writing – original draft, review \& editing.

  \section*{Acknowledgment(s)}

  The authors acknowledge contributors, users, and reviewers who provided
  feedback on decoding workflows, reproducibility scripts, and
  documentation quality. Any opinions, findings, conclusions, or recommendations expressed in this research are those of the author(s) and do not necessarily reflect the views of their respective affiliations.

  \section*{Data \& Code Availability}
  The data generated and analyzed during the present study are included in this study and its supplementary materials. Supplementary code developed from the \texttt{LiDMaS+} simulator is available on \href{https://github.com/DennisWayo/lidmas_cpp}{GitHub} to support transparency and reproducibility.

  \section*{Funding}
  This research was not funded.

  \section*{Disclosure statement}

  No potential conflict of interest was reported by the author(s).

  \appendix
  \section{Derivations and Consistency Checks for the Experimental Pipeline}
  This appendix records the operational equations used in the study and short consistency proofs for the estimator definitions.

  \subsection{Noise Models and CV-to-Discrete Mapping}
  In the \texttt{paper\_01} Pauli-reference workflow, a single parameter $p$ controls Bernoulli X-fault sampling on data qubits, and the decoded syndrome branch is the corresponding $Z$-stabilizer parity channel:
  \begin{equation}
    e^{(X)}_i \sim \mathrm{Bernoulli}(p),\qquad
    s^{(Z)} = H_Z e^{(X)} \pmod 2,
    \label{eq:pauli-branch}
  \end{equation}
  Equation~\eqref{eq:pauli-branch} is followed by single-sector decoding and logical-failure estimation. This is a code-capacity-style internal reference mode, not a full circuit-level depolarizing schedule.

  In the hybrid mode, each qubit samples continuous displacement components
  \begin{equation}
    \delta q,\delta p \sim \mathcal{N}(0,\sigma^2),
    \label{eq:hybrid-gaussian}
  \end{equation}
  then digitizes using nearest-lattice rounding at scale $\sqrt{\pi}$:
  \begin{equation}
    n_q=\mathrm{round}\!\left(\frac{\delta q}{\sqrt{\pi}}\right),\quad
    n_p=\mathrm{round}\!\left(\frac{\delta p}{\sqrt{\pi}}\right),\quad
    e^{(X)}=\lvert n_q\rvert \bmod 2,\quad
    e^{(Z)}=\lvert n_p\rvert \bmod 2.
    \label{eq:hybrid-digitization}
  \end{equation}
  In this workflow, only the $e^{(X)}\!\to s^{(Z)}$ branch is decoded in the hybrid threshold loop, i.e.,
  \begin{equation}
    s^{(Z)} = H_Z e^{(X)} \pmod 2.
    \label{eq:hybrid-z-syndrome}
  \end{equation}
  Equations~\eqref{eq:hybrid-gaussian}, \eqref{eq:hybrid-digitization}, and \eqref{eq:hybrid-z-syndrome} define the digitized CV-to-discrete mapping used in this study. Therefore, no continuous soft-information vector is provided to the decoder backend in the presented hybrid runs.

  \subsection{Decoder Objectives}
  Let $\mathcal{X}$ be the observed defect set and let $w_{ij}$ be pair/boundary weights on the decoder graph.

  For small matching instances, the matching-style backend solves
  \begin{equation}
    M^\star=\arg\min_{M\in\mathcal{M}(\mathcal{X})}\sum_{(i,j)\in M} w_{ij},
    \label{eq:mwpm-objective}
  \end{equation}
  where $\mathcal{M}(\mathcal{X})$ is the set of admissible matchings, including boundary matches. Equation~\eqref{eq:mwpm-objective} defines the exact small-instance correction objective, and the correction chain is lifted from paths corresponding to $M^\star$. For larger defect sets, the present implementation applies a greedy fallback and records the fallback rate.

  Union-Find tracks cluster parity
  \begin{equation}
    \pi(C)=|C\cap\mathcal{X}|\;\mathrm{mod}\;2,
    \label{eq:uf-parity}
  \end{equation}
  as in Eq.~\eqref{eq:uf-parity}, then grows odd clusters, merges collisions, and peels spanning forests to produce a parity-consistent correction.

  Neural-guided MWPM uses a lightweight linear guidance model for weight scaling. For feature vector
    $x_{ij}=(d_{\mathrm{man}},\Delta x,\Delta y,b_{\partial})$, the scale factor is
  \begin{equation}
    \alpha_{ij}=\mathrm{clip}\!\left(
                                                 \beta_0+\beta_1 d_{\mathrm{man}}+\beta_2 \Delta x+\beta_3 \Delta y+\beta_4 b_{\partial},
                                                 \alpha_{\min},\alpha_{\max}\right),
    \label{eq:neural-alpha}
  \end{equation}
  and the guidance-strength interpolation is
  \begin{equation}
    \tilde{w}_{ij}(\lambda)=w_{ij}+\lambda\left(\alpha_{ij}w_{ij}-w_{ij}\right),\qquad \lambda\in[0,1].
    \label{eq:neural-weight}
  \end{equation}
  Equations~\eqref{eq:neural-alpha} and \eqref{eq:neural-weight} preserve the matching objective while changing edge priorities through learned linear reweighting. The sensitivity sweep evaluates $\lambda=0,0.5,1$.

  \subsection{Logical-Error and Threshold Estimators}
  For code distance $d$, sweep parameter $\theta$ (either $p$ or $\sigma$), and $T$ trials with $F$ logical failures,
  \begin{equation}
    \widehat{p}_{L}(d,\theta)=\frac{F}{T}.
    \label{eq:ler-estimate}
  \end{equation}
  The implementation reports Wilson-score confidence intervals. Starting from the estimator in Eq.~\eqref{eq:ler-estimate}, with $z=1.96$ and
    $\widehat{p}_{L}=F/T$, define
  \begin{equation}
    \mu=\frac{\widehat{p}_{L}+z^2/(2T)}{1+z^2/T},\qquad
    h=\frac{z}{1+z^2/T}\sqrt{\frac{\widehat{p}_{L}(1-\widehat{p}_{L})}{T}+\frac{z^2}{4T^2}}.
    \label{eq:wilson-params}
  \end{equation}
  The reported interval is $[\mu-h,\mu+h]$ from Eq.~\eqref{eq:wilson-params}.

  For two distances $(d_a,d_b)$ on grid points $\theta_k$, define
  \begin{equation}
    \Delta_k=\widehat{p}_L(d_a,\theta_k)-\widehat{p}_L(d_b,\theta_k).
    \label{eq:delta-curve}
  \end{equation}
  If $\Delta_k\Delta_{k+1}<0$ in Eq.~\eqref{eq:delta-curve}, a linearized crossing estimate is
  \begin{equation}
    \widehat{\theta}_c=
    \theta_k-\Delta_k\frac{\theta_{k+1}-\theta_k}{\Delta_{k+1}-\Delta_k}.
    \label{eq:crossing-linear}
  \end{equation}
  If no sign change exists on the sampled grid, the practical proxy is $\theta_{k^\star}$ with
  \begin{equation}
    k^\star=\arg\min_k |\Delta_k|.
    \label{eq:crossing-proxy}
  \end{equation}
  The transition-window hybrid analysis excludes each initial exact-zero low-noise plateau before applying Eqs.~\eqref{eq:crossing-linear} and \eqref{eq:crossing-proxy}. This avoids reporting a plateau endpoint as a physical crossing. Equation~\eqref{eq:crossing-linear} defines the linearized crossing estimator, and Eq.~\eqref{eq:crossing-proxy} defines the practical proxy used when an interior sign change is absent.

  For Pauli finite-size analysis, the standard collapse ansatz is
  \begin{equation}
    p_L(d,p)\approx f\!\left((p-p_c)d^{1/\nu}\right),
    \label{eq:collapse-ansatz}
  \end{equation}
  with $(p_c,\nu)$ obtained by minimizing a collapse cost over sampled curves and bootstrap resamples using Eq.~\eqref{eq:collapse-ansatz}.

  \subsection{Consistency Proof Sketches}
  \begin{claim}
    The interval defined by Eq.~\eqref{eq:wilson-params} is the Wilson score interval for a Bernoulli proportion with confidence parameter $z$.
  \end{claim}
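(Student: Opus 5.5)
The plan is to start from the defining property of the Wilson score interval and show that its endpoints are exactly $\mu\pm h$ from Eq.~\eqref{eq:wilson-params}. By definition, the Wilson interval is the set of $p\in[0,1]$ that are not rejected by the two-sided score test at level $z$:
\begin{equation}
  \frac{|\widehat{p}_L-p|}{\sqrt{p(1-p)/T}}\le z .
\end{equation}
The variance in this test is evaluated at the hypothesized $p$, not at $\widehat{p}_L$. Both sides are nonnegative, so squaring preserves the inequality, and it becomes $(\widehat{p}_L-p)^2\le z^2p(1-p)/T$. This is a quadratic inequality in $p$.

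The second step is to collect terms. Expanding gives
\begin{equation}
  \left(1+\frac{z^2}{T}\right)p^2-\left(2\widehat{p}_L+\frac{z^2}{T}\right)p+\widehat{p}_L^2\le 0 .
\end{equation}
The leading coefficient is positive, so the solution set is the closed interval between the two roots, $[\mu-h,\mu+h]$. The midpoint of the roots is the linear coefficient divided by twice the quadratic coefficient:
\begin{equation}
  \frac{2\widehat{p}_L+z^2/T}{2(1+z^2/T)},
\end{equation}
which is the $\mu$ of Eq.~\eqref{eq:wilson-params} after dividing numerator and denominator by $2$.

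The third step, which is the only computational point where care is needed, is the half-width. I would compute the discriminant, which simplifies to
\begin{equation}
  \left(2\widehat{p}_L+\frac{z^2}{T}\right)^2-4\left(1+\frac{z^2}{T}\right)\widehat{p}_L^2=\frac{4z^2}{T}\left(\widehat{p}_L(1-\widehat{p}_L)+\frac{z^2}{4T}\right).
\end{equation}
The half-width is then the square root of the discriminant over $2(1+z^2/T)$. Rewriting $z^2\bigl(\widehat{p}_L(1-\widehat{p}_L)+z^2/(4T)\bigr)/T$ as $z^2\bigl(\widehat{p}_L(1-\widehat{p}_L)/T+z^2/(4T^2)\bigr)$ yields exactly the $h$ in Eq.~\eqref{eq:wilson-params}.

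Finally, I would note that the roots always lie in $[0,1]$. Substituting $p=0$ into the quadratic gives $\widehat{p}_L^2\ge 0$, and substituting $p=1$ gives $(1-\widehat{p}_L)^2\ge 0$, so the quadratic is nonnegative at both endpoints. Combined with the positive leading coefficient and the vertex $\mu\in[0,1]$, this places both roots in $[0,1]$, so no clipping is needed. This confirms that $[\mu-h,\mu+h]$ is the Wilson score interval with confidence parameter $z$, with $z=1.96$ giving the nominal 95\% level.
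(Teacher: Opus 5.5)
Your proposal is correct and follows the paper's route. Both arguments square the score-test acceptance condition into the quadratic inequality $(T+z^2)p^2-(2T\hat p+z^2)p+T\hat p^2\le 0$ (yours is this divided by $T$) and identify $\mu$ and $h$ as the center and half-width of the interval between its roots. You go slightly further than the paper by writing out the discriminant computation, and by checking that both roots lie in $[0,1]$, which the paper leaves implicit.
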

  \begin{proof}
    Let $\hat p=\widehat p_L=F/T$ from Eq.~\eqref{eq:ler-estimate}. The score-test acceptance region for null proportion $p$ is
    \[
      \frac{(\hat p-p)^2}{p(1-p)/T}\le z^2.
    \]
    Rearrangement gives a quadratic inequality in $p$:
    \[
      (T+z^2)p^2-(2T\hat p+z^2)p+T\hat p^2\le 0.
    \]
    The solution set is the closed interval between the two roots. Writing the roots in center-radius form yields
    \[
      \mu=\frac{\hat p+z^2/(2T)}{1+z^2/T},\qquad
      h=\frac{z}{1+z^2/T}\sqrt{\frac{\hat p(1-\hat p)}{T}+\frac{z^2}{4T^2}},
    \]
    so the interval is $[\mu-h,\mu+h]$, exactly as reported in Eq.~\eqref{eq:wilson-params}.
  \end{proof}

  \begin{claim}
    Under linear interpolation between adjacent grid points $(\theta_k,\Delta_k)$ and $(\theta_{k+1},\Delta_{k+1})$, Eq.~\eqref{eq:crossing-linear} is the unique zero of the interpolant whenever $\Delta_k\Delta_{k+1}<0$.
  \end{claim}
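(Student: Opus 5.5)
The plan is to write the linear interpolant explicitly, show it is a non-constant affine function on the segment, and then solve for its zero. Define, for $\theta\in[\theta_k,\theta_{k+1}]$ with $\theta_{k+1}>\theta_k$ (adjacent points of an increasing grid),
\[
  L(\theta)=\Delta_k+\frac{\Delta_{k+1}-\Delta_k}{\theta_{k+1}-\theta_k}\,(\theta-\theta_k),
\]
where $\Delta_k$ is given by Eq.~\eqref{eq:delta-curve}. This is the unique affine function with $L(\theta_k)=\Delta_k$ and $L(\theta_{k+1})=\Delta_{k+1}$.

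\textbf{Step 1 (non-degeneracy).} From $\Delta_k\Delta_{k+1}<0$, both values are nonzero and have opposite signs, so $\Delta_{k+1}-\Delta_k\neq 0$. Hence the slope of $L$ is nonzero. This is the only place the hypothesis is genuinely needed, and it is the one point that requires care: it guarantees that the quotient in Eq.~\eqref{eq:crossing-linear} is well defined.

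\textbf{Step 2 (uniqueness).} A nonconstant affine function on $\mathbb{R}$ is strictly monotone and has exactly one zero. Setting $L(\theta)=0$ and solving the linear equation gives
\[
  \theta=\theta_k-\Delta_k\frac{\theta_{k+1}-\theta_k}{\Delta_{k+1}-\Delta_k},
\]
which is exactly $\widehat{\theta}_c$ in Eq.~\eqref{eq:crossing-linear}.

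\textbf{Step 3 (zero lies inside the segment).} Write $\widehat{\theta}_c=\theta_k+t(\theta_{k+1}-\theta_k)$ with $t=-\Delta_k/(\Delta_{k+1}-\Delta_k)=|\Delta_k|/(|\Delta_k|+|\Delta_{k+1}|)$. The second equality uses the opposite signs of $\Delta_k$ and $\Delta_{k+1}$. It follows that $t\in(0,1)$, so $\widehat{\theta}_c$ lies strictly between $\theta_k$ and $\theta_{k+1}$.

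Together these steps show that Eq.~\eqref{eq:crossing-linear} gives the unique zero of the interpolant, and that this zero is an interior point of the interval. Uniqueness holds both on the segment and on the whole real line.
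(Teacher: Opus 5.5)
Your proof is correct and follows essentially the same route as the paper: write the affine interpolant $L$, solve $L(\theta)=0$ to recover Eq.~\eqref{eq:crossing-linear}, use the opposite signs to place the zero in $(\theta_k,\theta_{k+1})$, and get uniqueness from the nonzero slope. Your version is slightly tighter, because you derive $\Delta_{k+1}\neq\Delta_k$ from $\Delta_k\Delta_{k+1}<0$ (the paper states it as a side condition) and you locate the zero via the explicit weight $t=|\Delta_k|/(|\Delta_k|+|\Delta_{k+1}|)$; however, the sign hypothesis is also used in your Step 3, not only in Step 1 as your commentary claims.
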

  \begin{proof}
    Define the affine interpolant
    \[
      L(\theta)=\Delta_k+\frac{\Delta_{k+1}-\Delta_k}{\theta_{k+1}-\theta_k}(\theta-\theta_k).
    \]
    Solving $L(\theta)=0$ yields
    \[
      \theta=\theta_k-\Delta_k\frac{\theta_{k+1}-\theta_k}{\Delta_{k+1}-\Delta_k},
    \]
    which is exactly Eq.~\eqref{eq:crossing-linear}. Because $\Delta_k\Delta_{k+1}<0$, the endpoint values have opposite signs; therefore, the zero lies in $(\theta_k,\theta_{k+1})$. Uniqueness follows because $L$ is affine with nonzero slope when $\Delta_{k+1}\neq\Delta_k$.
  \end{proof}

  \begin{claim}
    If the seed map $g$ in Eq.~\eqref{eq:seed-map} is deterministic, repeated executions with identical $(s_0,d,\theta,t)$ generate identical pseudo-random streams.
  \end{claim}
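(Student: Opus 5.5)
The argument is a composition of deterministic maps, and the plan is to make that structure explicit. The seed map $g$ (referenced as Eq.~\eqref{eq:seed-map}, which formalizes the step ``Set RNG seed from $(s_0,d,\theta,t)$'' in Algorithm~\ref{alg:surface-threshold}) sends the tuple $(s_0,d,\theta,t)$ to an initial seed $g(s_0,d,\theta,t)$. This seed initializes the pseudo-random generator's internal state $S_0$. The generator is then a deterministic finite-state machine: a transition function $\tau$ with $S_{j+1}=\tau(S_j)$, and an output function $o$ with outputs $u_j=o(S_j)$.

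The proof proceeds in three steps.

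\emph{Step 1: the seed.} Because $g$ is a function, identical inputs $(s_0,d,\theta,t)$ give the identical seed $g(s_0,d,\theta,t)$ in both executions. The initialization map from seed to state is also a fixed function of the implementation, so the initial states agree: $S_0=S_0'$.

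\emph{Step 2: the stream.} Induction on $j$ shows $S_j=S_j'$ for all $j$. The base case is Step 1. The inductive step holds because $\tau$ is the same function in both runs. It follows that $u_j=o(S_j)=o(S_j')=u_j'$ for every $j$, so the two pseudo-random streams coincide.

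\emph{Step 3 (optional corollary).} The same composition argument carries the equality through to the trial's outputs. The samples of Eqs.~\eqref{eq:pauli-branch} or \eqref{eq:hybrid-gaussian}--\eqref{eq:hybrid-digitization} are deterministic transforms of the stream $(u_j)$. The syndrome is then $s^{(Z)}=H_Ze^{(X)}$, and each decoder in Algorithms~\ref{alg:mwpm-decoder}--\ref{alg:neural-mwpm} is deterministic. Hence the logical-failure indicators, and therefore $\widehat p_L$ in Eq.~\eqref{eq:ler-estimate}, are reproduced exactly.

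The main obstacle is making the hidden hypotheses explicit rather than any calculation. The argument needs the following:
\begin{itemize}[nosep]
  \item the PRNG algorithm and its seeding routine are the same across executions (same library version and platform);
  \item each trial's generator state is freshly seeded from $g$ and is not shared with other trials or threads, so scheduling order cannot interleave draws;
  \item the real-valued inputs such as $\theta$ are represented bit-identically when passed to $g$.
\end{itemize}
I would state these as standing assumptions folded into ``$g$ is deterministic'' and the fixed implementation. I would also add a remark that floating-point transforms (Gaussian sampling, rounding by $\sqrt{\pi}$) are deterministic for a fixed binary and platform. Under these assumptions the claim reduces to functional composition.
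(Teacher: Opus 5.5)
Your proposal is correct and follows the same route as the paper: determinism of $g$ gives the same seed, and a generator initialized with that seed and driven in the same call order reproduces the same sequence. Your induction on the transition map $\tau$ and your list of standing assumptions (same PRNG implementation, unshared per-trial state, bit-identical $\theta$) spell out what the paper compresses into ``identical execution semantics.''
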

  \begin{proof}
    For fixed inputs $(s_0,d,\theta,t)$, determinism implies $g$ returns the same seed value on every execution. A pseudo-random generator initialized with the same seed and the same call order produces the same sequence; thus, trial-level stochastic draws are reproducible under identical execution semantics.
  \end{proof}

  \subsection{Deterministic Reproducibility Parameterization}
  Each run is indexed by $(d,\theta,t)$ with base seed $s_0$. A deterministic seed map
  \begin{equation}
    s(d,\theta,t)=g(s_0,d,\theta,t)
    \label{eq:seed-map}
  \end{equation}
  guarantees identical pseudo-random streams for repeated executions under the same configuration. Together with fixed sweep sets
  \begin{equation}
    \begin{aligned}
      \mathcal{D} &= \{3,5,7\}, \\
      \mathcal{P} &= \{0.04,0.05,\dots,0.12\}, \\
      \Sigma &= \{0.05,0.10,\dots,0.60\}, \\
      \Sigma_{\mathrm{ref}} &= \{0.30,0.31,\dots,0.50\}, \\
      \mathcal{D}_{\mathrm{ext}} &= \{9\}.
    \end{aligned}
    \label{eq:sweep-sets}
  \end{equation}
  Equations~\eqref{eq:seed-map} and \eqref{eq:sweep-sets} define the reproducible threshold grids, including the dense transition-window sweep and targeted distance extension.

  \section{Supplementary Indexed Lattice and Decoder Trace}
  \label{app:indexed-trace}
  This appendix retains the indexed $d=5$ support graph and representative decoder trace as implementation-facing reproducibility aids. Figure~\ref{fig:indexed-syndrome-grid} shows the check/data indices used to interpret the request. Figure~\ref{fig:distance-syndrome-grid} shows the same support topology across the distances used in the study without full labels, since fully indexed $d=7$ and $d=9$ panels are not readable at journal scale. Figure~\ref{fig:decoder-trace-highlighted} marks the active Z-syndrome checks and correction qubits for the visualized $d=5$ instance, while Table~\ref{tab:decoder-trace-example} gives replayed single-instance trace lists for $d=3,5,7,9$.

  \begin{figure*}[t]
    \centering
    \includegraphics[width=\textwidth]{figure_surface_2d_syndrome_layout_all_indexed.pdf}
    \caption{Indexed $d=5$ surface-code support graph used for decoder-trace interpretation. Panels show X-check supports, Z-check supports, and combined supports with explicit check/data indices. Data qubits are dark circles, X checks are blue squares, and Z checks are orange diamonds.}
    \label{fig:indexed-syndrome-grid}
  \end{figure*}

  \begin{figure*}[t]
    \centering
    \includegraphics[width=\textwidth]{figure_surface_2d_syndrome_layout_distance_grid.pdf}
    \caption{Combined support topology for $d=3,5,7,9$. The panels use the same data-qubit/check placement convention as Fig.~\ref{fig:indexed-syndrome-grid}; full labels are omitted for readability at larger distance.}
    \label{fig:distance-syndrome-grid}
  \end{figure*}

  \begin{figure*}[t]
    \centering
    \includegraphics[width=\textwidth]{figure_decoder_trace_highlighted.pdf}
    \caption{Highlighted single-instance decoder trace ($d=5$, hybrid $\sigma=0.18$). Red outlined diamonds mark active Z-syndrome checks; green data-qubit nodes mark decoder-selected correction flips.}
    \label{fig:decoder-trace-highlighted}
  \end{figure*}

  \begin{table*}[t]
    \centering
    \scriptsize
    \setlength{\tabcolsep}{3pt}
    \caption{Single-instance decoder traces across distances (hybrid-style decoder-IO requests, $\sigma=0.18$).}
    \label{tab:decoder-trace-example}
    \begin{tabular}{llll}
      \toprule
      Distance & Decoder & Detected Z-syndrome indices & Correction qubit indices \\
      \midrule
      $d=3$ & Matching-style & 0, 3 & 3, 7 \\
      $d=3$ & Union-Find & 0, 3 & 2, 10 \\
      $d=3$ & Neural-guided & 0, 3 & 6, 11 \\
      $d=5$ & Matching-style & \shortstack[l]{2, 3, 4, 5, 9,\\ 10, 13} & 17, 23, 26, 32 \\
      $d=5$ & Union-Find & \shortstack[l]{2, 3, 4, 5, 9,\\ 10, 13} & \shortstack[l]{2, 6, 10, 13, 23,\\ 26} \\
      $d=5$ & Neural-guided & \shortstack[l]{2, 3, 4, 5, 9,\\ 10, 13} & 17, 23, 26, 32 \\
      $d=7$ & Matching-style & \shortstack[l]{2, 5, 6, 8, 11,\\ 13, 15, 17, 18, 21,\\ 23, 24, 25, 26, 27,\\ 30, 32, 33, 34, 35} & \shortstack[l]{8, 21, 33, 38, 40,\\ 48, 49, 55, 56, 57,\\ 62, 63, 69, 70, 72,\\ 77, 83} \\
      $d=7$ & Union-Find & \shortstack[l]{2, 5, 6, 8, 11,\\ 13, 15, 17, 18, 21,\\ 23, 24, 25, 26, 27,\\ 30, 32, 33, 34, 35} & \shortstack[l]{8, 11, 12, 21, 23,\\ 24, 31, 57, 73, 78,\\ 80, 82} \\
      $d=7$ & Neural-guided & \shortstack[l]{2, 5, 6, 8, 11,\\ 13, 15, 17, 18, 21,\\ 23, 24, 25, 26, 27,\\ 30, 32, 33, 34, 35} & \shortstack[l]{8, 11, 13, 21, 23,\\ 24, 33, 50, 72, 78,\\ 79, 82} \\
      $d=9$ & Matching-style & \shortstack[l]{0, 2, 3, 4, 5,\\ 6, 9, 10, 13, 18,\\ 19, 23, 25, 30, 31,\\ 32, 33, 37, 38, 42,\\ 46, 47, 48, 49, 52,\\ 56, 59, 60, 61, 62} & \shortstack[l]{2, 3, 4, 6, 13,\\ 33, 38, 60, 67, 69,\\ 70, 72, 83, 93, 98,\\ 107, 108, 114, 115, 116,\\ 117, 118, 119, 124, 127,\\ 135} \\
      $d=9$ & Union-Find & \shortstack[l]{0, 2, 3, 4, 5,\\ 6, 9, 10, 13, 18,\\ 19, 23, 25, 30, 31,\\ 32, 33, 37, 38, 42,\\ 46, 47, 48, 49, 52,\\ 56, 59, 60, 61, 62} & \shortstack[l]{10, 14, 17, 26, 31,\\ 38, 44, 50, 52, 56,\\ 73, 74, 75, 77, 87,\\ 92, 93, 101, 109, 113,\\ 124, 128, 139, 141} \\
      $d=9$ & Neural-guided & \shortstack[l]{0, 2, 3, 4, 5,\\ 6, 9, 10, 13, 18,\\ 19, 23, 25, 30, 31,\\ 32, 33, 37, 38, 42,\\ 46, 47, 48, 49, 52,\\ 56, 59, 60, 61, 62} & \shortstack[l]{21, 29, 31, 33, 37,\\ 38, 42, 60, 73, 74,\\ 76, 78, 83, 93, 109,\\ 110, 124, 127, 136, 137,\\ 138, 141} \\
      \bottomrule
    \end{tabular}
  \end{table*}

  \clearpage
  \bibliographystyle{apsrev4-2}
  \bibliography{paper_01}

\end{document}